\providecommand{\U}[1]{\protect\rule{.1in}{.1in}}
\newtheorem{theorem}{Theorem}
\newtheorem{corollary}[theorem]{Corollary}
\newtheorem{definition}[theorem]{Definition}
\newtheorem{lemma}[theorem]{Lemma}
\newtheorem{remark}[theorem]{Remark}
\newenvironment{proof}[1][Proof]{\noindent\textbf{#1.} }{\ \rule{0.5em}{0.5em}}
\begin{document}

\title{\textbf{Analytical} \textbf{Blowup Solutions to the }$3$\textbf{-dimensional
Pressureless Navier-Stokes-Poisson Equations with Density-dependent Viscosity}}
\author{Y\textsc{uen} M\textsc{anwai\thanks{E-mail address: nevetsyuen@hotmail.com }}\\\textit{Department of Applied Mathematics, }\\\textit{The Hong Kong Polytechnic University,}\\\textit{Hung Hom, Kowloon, Hong Kong}}
\date{Revised 29-Oct-2008}
\maketitle

\begin{abstract}
We study the pressureless Navier--Stokes-Poisson equations of describing the
evolution of the gaseous star in astrophysics. The isothermal blowup solutions
of Yuen, to the Euler-Poisson equations in $R^{2}$, can be extended to the
pressureless Navier-Stokes-Poisson equations with density-dependent viscosity
in $R^{3}$. Besides some remarks, about the meaning of the blowup solutions
and the applicability of such solutions to the the drift-diffusion model in
semiconductors, are discussed in the end.

\end{abstract}

\section{Introduction}

The evolution of a self-gravitating fluid such as gaseous stars can be
formulated by the Navier-Stokes-Poisson equation of the following form:
\begin{equation}
\left\{
\begin{array}
[c]{rl}%
{\normalsize \rho}_{t}{\normalsize +\nabla\bullet(\rho u)} & {\normalsize =}%
{\normalsize 0,}\\
{\normalsize (\rho u)}_{t}{\normalsize +\nabla\bullet(\rho u\otimes u)+\nabla
P} & {\normalsize =}{\normalsize -\rho\nabla\Phi+vis(\rho,u),}\\
{\normalsize \Delta\Phi(t,x)} & {\normalsize =\alpha(N)}{\normalsize \rho,}%
\end{array}
\right.  \label{Euler-Poisson}%
\end{equation}
where $\alpha(N)$ is a constant related to the unit ball in $R^{N}$:
$\alpha(1)=2$; $\alpha(2)=2\pi$ and For $N\geq3,$%
\[
\alpha(N)=N(N-2)V(N)=N(N-2)\frac{\pi^{N/2}}{\Gamma(N/2+1)},
\]
where $V(N)$ is the volume of the unit ball in $R^{N}$ and $\Gamma$ is a Gamma
function. And as usual, $\rho=\rho(t,x)$ and $u=u(t,x)\in\mathbf{R}^{N}$ are
the density, the velocity respectively. $P=P(\rho)$\ is the pressure.

In the above system, the self-gravitational potential field $\Phi=\Phi
(t,x)$\ is determined by the density $\rho$ through the Poisson equation.

And $vis(\rho,u)$ is the viscosity function:%
\[
vis(\rho,u)=\bigtriangledown(\mu(\rho)\bigtriangledown\bullet u).
\]
Here we under a common assumption for:
\[
\mu(\rho)\doteq\kappa\rho^{\theta}%
\]
and $\kappa$ and $\theta\geq0$ are the constants. In particular, when
$\theta=0$, it returns the expression for the $u$ dependent only viscosity
function:%
\[
vis(\rho,u)=\kappa\Delta u.
\]
The equations (\ref{Euler-Poisson})$_{1}$ and (\ref{Euler-Poisson})$_{2}$
$(vis(\rho,u)\neq0)$ are the compressible Navier-Stokes equations with forcing
term. The equation (\ref{Euler-Poisson})$_{3}$ is the Poisson equation through
which the gravitational potential is determined by the density distribution of
the density itself. Thus, we call the system (\ref{Euler-Poisson}) the
Navier--Stokes-Poisson equations.

Here, if the $vis(\rho,u)=0$, the system is called the Euler-Poisson
equations.\ In this case, the equations can be viewed as a prefect gas model.
For $N=3$, (\ref{Euler-Poisson}) is a classical (nonrelativistic) description
of a galaxy, in astrophysics. See \cite{C}, \cite{M1} for a detail about the system.

$P=P(\rho)$\ is the pressure. The $\gamma$-law can be applied on the pressure
$P(\rho)$, i.e.%
\begin{equation}
{\normalsize P}\left(  \rho\right)  {\normalsize =K\rho}^{\gamma}\doteq
\frac{{\normalsize \rho}^{\gamma}}{\gamma}, \label{gamma}%
\end{equation}
which is a commonly the hypothesis. The constant $\gamma=c_{P}/c_{v}\geq1$,
where $c_{P}$, $c_{v}$\ are the specific heats per unit mass under constant
pressure and constant volume respectively, is the ratio of the specific heats,
that is, the adiabatic exponent in (\ref{gamma}). In particular, the fluid is
called isothermal if $\gamma=1$. With $K=0$, we call the system is pressureless.

For the $3$-dimensional case, we are interested in the hydrostatic equilibrium
specified by $u=0,S=\ln K$. According to \cite{C}, the ratio between the core
density $\rho(0)$ and the mean density $\overset{\_}{\rho}$ for $6/5<\gamma
<2$\ is given by%
\[
\frac{\overset{\_}{\rho}}{\rho(0)}=\left(  \frac{-3}{z}\dot{y}\left(
z\right)  \right)  _{z=z_{0}}%
\]
where $y$\ is the solution of the Lane-Emden equation with $n=1/(\gamma-1)$,%
\[
\ddot{y}(z)+\dfrac{2}{z}\dot{y}(z)+y(z)^{n}=0,\text{ }y(0)=\alpha>0,\text{
}\dot{y}(0)=0,\text{ }n=\frac{1}{\gamma-1},
\]
and $z_{0}$\ is the first zero of $y(z_{0})=0$. We can solve the Lane-Emden
equation analytically for%
\[
y_{anal}(z)\doteq\left\{
\begin{array}
[c]{ll}%
1-\frac{1}{6}z^{2}, & n=0;\\
\dfrac{\sin z}{z}, & n=1;\\
\dfrac{1}{\sqrt{1+z^{2}/3}}, & n=5,
\end{array}
\right.
\]
and for the other values, only numerical values can be obtained. It can be
shown that for $n<5$, the radius of polytropic models is finite; for $n\geq5$,
the radius is infinite.

Gambin \cite{G} and Bezard \cite{B} obtained the existence results about the
explicitly stationary solution $\left(  u=0\right)  $ for $\gamma=6/5$ in
Euler-Poisson equations$:$%
\begin{equation}
\rho=\left(  \frac{3KA^{2}}{2\pi}\right)  ^{5/4}\left(  1+A^{2}r^{2}\right)
^{-5/2},\label{stationsoluionr=6/5}%
\end{equation}
where $A$ is constant.\newline The Poisson equation (\ref{Euler-Poisson}%
)$_{3}$ can be solved as%
\[
{\normalsize \Phi(t,x)=}\int_{R^{N}}G(x-y)\rho(t,y){\normalsize dy,}%
\]
where $G$ is the Green's function for the Poisson equation in the
$N$-dimensional spaces defined by
\[
G(x)\doteq\left\{
\begin{array}
[c]{ll}%
|x|, & N=1;\\
\log|x|, & N=2;\\
\dfrac{-1}{|x|^{N-2}}, & N\geq3.
\end{array}
\right.
\]
In the following, we always seek solutions in spherical symmetry. Thus, the
Poisson equation (\ref{Euler-Poisson})$_{3}$ is transformed to%
\[
{\normalsize r^{N-1}\Phi}_{rr}\left(  {\normalsize t,x}\right)  +\left(
N-1\right)  r^{N-2}\Phi_{r}{\normalsize =}\alpha\left(  N\right)
{\normalsize \rho r^{N-1},}%
\]%
\[
\Phi_{r}=\frac{\alpha\left(  N\right)  }{r^{N-1}}\int_{0}^{r}\rho
(t,s)s^{N-1}ds.
\]

\begin{definition}
[Blowup]We say a solution blows up if one of the following conditions is
satisfied:\newline(1)The solution becomes infinitely large at some point $x$
and some finite time $T_{0}$;\newline(2)The derivative of the solution becomes
infinitely large at some point $x$ and some finite time $T_{0}$.
\end{definition}

In this paper, we concern about blowup solutions for the $3$-dimensional
pressureless Navier--Stokes-Poisson equations with the density-dependent
viscosity, which may describe the phenomenon called the core collapsing in
evolution of gas star. And our aim is to construct a family of such blowup
solutions to it.

Historically in astrophysics, Goldreich and Weber \cite{GW} constructed the
analytical blowup solution (collapsing) of the $3$-dimensional Euler-Poisson
equation for $\gamma=4/3$ for the non-rotating gas spheres. After that, Makino
\cite{M1} obtained the rigorously mathematical proof of the existence of such
kind of blowup solutions. And in \cite{DXY}, we find the extension of the
above blowup solutions to the case . In \cite{Y}, the solutions with a from is
rewritten as

For $N\geq3$ and $\gamma=(2N-2)/N$,
\begin{equation}
\left\{
\begin{array}
[c]{c}%
\rho(t,r)=\left\{
\begin{array}
[c]{c}%
\dfrac{1}{a(t)^{N}}y(\frac{r}{a(t)})^{N/(N-2)},\text{ for }r<a(t)Z_{\mu};\\
0,\text{ for }a(t)Z_{\mu}\leq r.
\end{array}
\right.  \text{, }{\normalsize u(t,r)=}\dfrac{\dot{a}(t)}{a(t)}%
{\normalsize r,}\\
\ddot{a}(t){\normalsize =}-\dfrac{\lambda}{a(t)^{N-1}},\text{ }%
{\normalsize a(0)=a}_{0}>0{\normalsize ,}\text{ }\dot{a}(0){\normalsize =a}%
_{1},\\
\ddot{y}(z){\normalsize +}\dfrac{N-1}{z}\dot{y}(z){\normalsize +}\dfrac
{\alpha(N)}{(2N-2)K}{\normalsize y(z)}^{N/(N-2)}{\normalsize =\mu,}\text{
}y(0)=\alpha>0,\text{ }\dot{y}(0)=0,
\end{array}
\right.  \label{solution2}%
\end{equation}
where $\mu=[N(N-2)\lambda]/(2N-2)K$ and the finite $Z_{\mu}$ is the first zero
of $y(z)$;

For $N=2$ and $\gamma=1,$%
\begin{equation}
\left\{
\begin{array}
[c]{c}%
\rho(t,r)=\dfrac{1}{a(t)^{2}}e^{y(r/a(t))}\text{, }{\normalsize u(t,r)=}%
\dfrac{\dot{a}(t)}{a(t)}{\normalsize r;}\\
\ddot{a}(t){\normalsize =}-\dfrac{\lambda}{a(t)},\text{ }{\normalsize a(0)=a}%
_{0}>0{\normalsize ,}\text{ }\dot{a}(0){\normalsize =a}_{1};\\
\ddot{y}(x){\normalsize +}\dfrac{1}{x}\dot{y}(x){\normalsize +\dfrac
{\alpha(N)}{K}e}^{y(x)}{\normalsize =\mu,}\text{ }y(0)=\alpha,\text{ }\dot
{y}(0)=0,
\end{array}
\right.  \label{solution 3}%
\end{equation}
where $K>0$, $\mu=2\lambda/K$ with a sufficiently small $\lambda$ and $\alpha$
are constants.

For the construction of special analytical solutions to Navier-Stokes
equations, readers may refer Yuen's recent results in \cite{Y2}.

In this article, We extend the isothermal $R^{2}$ blowup solutions to the
Euler-Poisson equations to the pressureless Navier-Stokes-Poisson equations
with density-dependent viscosity in $R^{3}$with $\theta=1$ in spherical
symmetry:%
\begin{equation}
\left\{
\begin{array}
[c]{rl}%
\rho_{t}+u\rho_{r}+\rho u_{r}+{\normalsize \dfrac{2}{r}\rho u} &
{\normalsize =0,}\\
\rho\left(  u_{t}+uu_{r}\right)  & {\normalsize =-}\dfrac{4\pi\rho}{r^{2}}%
{\displaystyle\int_{0}^{r}}
\rho(t,s)s^{2}ds+[\kappa\rho]_{r}u_{r}+(\kappa\rho)(u_{rr}+\dfrac{2}{r}%
u_{r}-\dfrac{2}{r^{2}}u),
\end{array}
\right.  \label{gamma=1}%
\end{equation}
in the form of the following theorem.

\begin{theorem}
\label{thm:1}For the $3$-dimensional pressureless Navier--Stokes-Poisson
equations with $\theta=1$, in spherical symmetry, (\ref{gamma=1}), there
exists a family of solutions,%
\begin{equation}
\left\{
\begin{array}
[c]{c}%
\rho(t,r)=\dfrac{1}{(T-Ct)^{3}}e^{y(r/(T-Ct))}\text{, }{\normalsize u(t,r)=}%
\dfrac{-C}{T-Ct}{\normalsize r;}\\
\ddot{y}(x){\normalsize +}\dfrac{2}{x}\dot{y}(x){\normalsize +\dfrac{4\pi
}{\kappa C}e}^{y(x)}{\normalsize =0,}\text{ }y(0)=\alpha,\text{ }\dot{y}(0)=0,
\end{array}
\right.  \label{solution1}%
\end{equation}
where $T>0$, $\kappa>0$, $C>0$ and $\alpha$ are constants.\newline And the
solutions blowup in the finite time $T/C$.
\end{theorem}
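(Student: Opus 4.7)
The strategy is direct substitution: plug the self-similar ansatz $\rho(t,r) = a(t)^{-3}e^{y(r/a(t))}$ and $u(t,r) = (\dot{a}(t)/a(t))r$ with $a(t) = T-Ct$ into the two equations of (\ref{gamma=1}) and show that the system collapses to the stated ODE for $y$ together with the identity $\ddot{a}=0$, which $a(t) = T-Ct$ trivially satisfies. The blowup at $t=T/C$ is then immediate from $a(T/C)=0$: $\rho \sim a^{-3} \to \infty$ and $u \sim r/a \to \infty$.

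First I would verify the continuity equation. A short calculation gives $u_r + 2u/r = 3\dot{a}/a$, while $\rho_t + u\rho_r = -(3\dot{a}/a)\rho$; thus (\ref{gamma=1})$_1$ is satisfied identically, independent of the shape of $y$. So the continuity equation only fixes the self-similar scaling of the ansatz and places no constraint on $y$.

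Next I would turn to the momentum equation. Two algebraic cancellations make it tractable. The inertial term $u_t + u u_r$ equals $(\ddot{a}/a)r = 0$ because $a(t)$ is affine in $t$. The Laplacian-type viscous piece $u_{rr} + (2/r)u_r - (2/r^2)u$ also vanishes identically for any linear-in-$r$ velocity. Consequently the gravitational term and the $[\kappa\rho]_r u_r$ term are the only survivors, so (\ref{gamma=1})$_2$ reduces to
\[
[\kappa\rho]_r\, u_r \;=\; \frac{4\pi\rho}{r^{2}}\int_{0}^{r}\rho(t,s)\,s^{2}\,ds.
\]

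Finally, introducing the self-similar variable $x = r/a(t)$, the substitution $s = a\xi$ converts the gravitational integral to $a^{-3}\int_{0}^{x} e^{y(\xi)}\xi^{2}\,d\xi$. After cancelling the common factor $e^{y(x)}$ and using $\dot{a} = -C$, the above identity becomes
\[
-\kappa C\, x^{2}\,\dot{y}(x) \;=\; 4\pi\int_{0}^{x} e^{y(\xi)}\xi^{2}\,d\xi.
\]
Differentiating in $x$ and dividing by $-\kappa C x^{2}$ yields precisely the ODE $\ddot{y}(x) + (2/x)\dot{y}(x) + (4\pi/(\kappa C))e^{y(x)} = 0$; the value $y(0)=\alpha$ is free to prescribe, and $\dot{y}(0)=0$ follows by letting $x\to 0^{+}$ in the integral form. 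The only step that requires care is the reduction of the momentum equation, and even there the main fact — the twin cancellations of the inertial and $u$-viscosity terms under a linear velocity profile — is structural rather than a heavy computation, so no serious obstacle is expected.
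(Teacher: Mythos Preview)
Your approach is essentially the paper's --- direct substitution of the self-similar ansatz --- and your algebra is correct and in fact cleaner: you isolate the two structural cancellations (the inertial term $u_t+uu_r=(\ddot a/a)r$ vanishes because $a$ is affine, and the viscous combination $u_{rr}+\tfrac{2}{r}u_r-\tfrac{2}{r^2}u$ vanishes for any linear-in-$r$ velocity) before computing, whereas the paper grinds through the full substitution and then packages the surviving terms into an auxiliary function $Q(z)=\kappa C\dot y(z)+\tfrac{4\pi}{z^2}\int_0^z e^{y}\omega^2\,d\omega$ and shows $\dot Q=-(2/z)Q$, $Q(0)=0$, hence $Q\equiv 0$. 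Your integral identity $-\kappa C x^2\dot y(x)=4\pi\int_0^x e^{y(\xi)}\xi^2\,d\xi$ is exactly $Q=0$ rewritten, so the two arguments are equivalent.

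The one genuine omission is the existence of $y$. The theorem asserts that a family of solutions \emph{exists}, so you need to know that the singular initial-value problem $\ddot y+\tfrac{2}{x}\dot y+\sigma e^{y}=0$, $y(0)=\alpha$, $\dot y(0)=0$ with $\sigma=4\pi/(\kappa C)>0$ actually has a solution, and a global one on $[0,\infty)$ so that $\rho$ is defined for all $r\ge 0$. The paper spends two lemmas on this: local existence near the singular point $x=0$ via a contraction argument on the integral form $\dot y(x)=-\sigma x^{-2}\int_0^x e^{y(s)}s^2\,ds$, and then global continuation using the monotonicity $y(x)\le y(0)=\alpha$ to bound $\dot y$ on finite intervals. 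None of this is deep, but it is not automatic either, since the standard Picard theorem does not apply at $x=0$; you should at least flag it.
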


\section{Separable Blowup Solutions}

In this section, before presenting the proof of Theorem \ref{thm:1}, we
prepare some lemmas. First, we obtain a general class of solutions for the
continuity equation of mass in radial symmetry (\ref{gamma=1})$_{1}$.

\begin{lemma}
\label{lem:generalsolutionformasseq}For the 3-dimensional conservation of mass
in spherical symmetry
\begin{equation}
\rho_{t}+u\rho_{r}+\rho u_{r}+\dfrac{2}{r}\rho u=0,
\label{massequationspherical}%
\end{equation}
there exist solutions,%
\begin{equation}
\rho(t,r)=\frac{1}{(T-Ct)^{3}}e^{f(r/(T-Ct))},\text{ }{\normalsize u(t,r)=}%
\frac{-C}{T-Ct}{\normalsize r,} \label{generalsolutionformassequation}%
\end{equation}
where $T$ and $C$ are positive constants and $f\in C^{1}$ is a non-negative
function .
\end{lemma}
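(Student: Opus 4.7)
The proof will be a direct verification: plug the ansatz into the continuity equation and check that everything cancels. Since $u$ is linear in $r$ and $\rho$ has a self-similar profile of the form $(T-Ct)^{-3}e^{f(s)}$ with $s:=r/(T-Ct)$, the chain rule will give every term a common factor $(T-Ct)^{-4}e^{f(s)}$, after which the vanishing of the sum is a finite arithmetic check.

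Concretely, I would first set $s=r/(T-Ct)$ and record the partial derivatives
\[
\partial_t s=\frac{Cr}{(T-Ct)^2}=\frac{Cs}{T-Ct},\qquad \partial_r s=\frac{1}{T-Ct}.
\]
Then I would compute, one by one,
\[
\rho_t=\frac{3C}{(T-Ct)^4}e^{f(s)}+\frac{e^{f(s)}\dot f(s)\,Cs}{(T-Ct)^4},\qquad
\rho_r=\frac{e^{f(s)}\dot f(s)}{(T-Ct)^4},
\]
and observe that $u_r=-C/(T-Ct)$ and $u/r=-C/(T-Ct)$, so $\rho u_r=-C(T-Ct)^{-4}e^{f(s)}$ and $(2/r)\rho u=-2C(T-Ct)^{-4}e^{f(s)}$, while $u\rho_r=-Cs(T-Ct)^{-4}e^{f(s)}\dot f(s)$.

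Adding the four contributions pulls out the common factor $(T-Ct)^{-4}e^{f(s)}$ and leaves
\[
3C+Cs\dot f(s)-Cs\dot f(s)-C-2C=0,
\]
which is the desired identity. The $f$-dependent pieces cancel between $\rho_t$ and $u\rho_r$, and the three constants balance between the time derivative, $\rho u_r$ and the geometric term $(2/r)\rho u$; this is exactly where the exponents $3$ in $(T-Ct)^{-3}$ and $2/r$ in the spherical divergence conspire. There is no obstacle to speak of — the only thing worth being careful about is the sign of $\partial_t s$ (hence the sign of $C$ in $u$) and keeping the chain-rule factors $1/(T-Ct)$ straight. Non-negativity of $f$ is not used in the computation; it is only recorded so that $\rho\geq 0$ remains physically meaningful.
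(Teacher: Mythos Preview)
Your proposal is correct and follows essentially the same route as the paper: a direct substitution of the ansatz into the continuity equation, followed by termwise cancellation. Your use of the similarity variable $s=r/(T-Ct)$ organizes the chain-rule bookkeeping a bit more cleanly than the paper's line-by-line expansion, but the argument is otherwise identical.
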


\begin{proof}
We just plug (\ref{generalsolutionformassequation}) into
(\ref{massequationspherical}). Then
\begin{align*}
&  \rho_{t}+u\rho_{r}+\rho u_{r}+\dfrac{2}{r}\rho u\\
&  =\frac{(-3)(-C)e^{f(r/(T-Ct))}}{(T-Ct)^{4}}+\frac{e^{f(r/(T-Ct))}\dot
{f}(r/(T-Ct))}{(T-Ct)^{3}}\frac{r(-1)(-C)}{(T-Ct)^{2}}\\
&  +\frac{(-C)}{T-Ct}{\normalsize r}\frac{e^{f(r/(T-Ct))}\dot{f}%
(r/(T-Ct))}{(T-Ct)^{3}}\frac{1}{T-Ct}+\frac{e^{f(r/(T-t))}}{(T-Ct)^{3}}%
\frac{(-C)}{T-Ct}{\normalsize +}\dfrac{2}{r}\frac{e^{f(r/(T-Ct))}}{(T-Ct)^{3}%
}\frac{(-C)}{T-Ct}{\normalsize r}\\
&  =\frac{3Ce^{f(r/(T-Ct))}}{(T-Ct)^{4}}+\frac{Ce^{f(r/(T-Ct))}\dot
{f}(r/(T-Ct))}{(T-Ct)^{5}}r\\
&  -\frac{Ce^{f(r/(T-Ct))}\dot{f}(r/(T-Ct))}{(T-Ct)^{5}}r-\frac
{Ce^{f(r/(T-Ct))}}{(T-Ct)^{4}}-\frac{2Ce^{f(r/(T-Ct))}}{(T-Ct)^{4}}\\
&  =0.
\end{align*}
The proof is completed.
\end{proof}

Besides, we need the two lemmas for stating the property of the function
$y(z)$. The similar lemmas are already given in (Lemmas 9 and 10) Yuen's
article \cite{Y1}. For the completeness of the article, the proof of the below
lemmas are presented.

\begin{lemma}
\label{lemma2}There exists a sufficiently small $x_{0}>0$, such that the
equation%
\begin{equation}
\left\{
\begin{array}
[c]{c}%
\ddot{y}(x){\normalsize +}\dfrac{2}{x}\dot{y}(x){\normalsize +\sigma e}%
^{y(x)}{\normalsize =0,}\\
y(0)=\alpha,\text{ }\dot{y}(0)=0,
\end{array}
\right.  \label{SecondorderElliptic}%
\end{equation}
where $\sigma>0$ and $\alpha$ are constants, has a solution $y=y(x)\in
C^{2}[0,x_{0}]$.
\end{lemma}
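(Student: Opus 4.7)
The plan is to recast the singular ODE as a fixed-point equation that absorbs the $2/x$ singularity, then apply the Banach contraction mapping theorem on a small interval $[0,x_0]$. First, I would multiply the equation by $x^2$ to rewrite it as
\[
\bigl(x^{2}\dot{y}(x)\bigr)'=-\sigma\,x^{2}e^{y(x)}.
\]
Integrating from $0$ to $x$, using $\dot{y}(0)=0$, gives $x^{2}\dot{y}(x)=-\sigma\int_{0}^{x}s^{2}e^{y(s)}\,ds$, and dividing by $x^{2}$ and integrating once more with $y(0)=\alpha$ produces the equivalent integral equation
\[
y(x)=\alpha-\sigma\int_{0}^{x}\frac{1}{t^{2}}\int_{0}^{t}s^{2}e^{y(s)}\,ds\,dt.
\]
The inner integral is $O(t^{3})$, so the factor $1/t^{2}$ is tamed and the iterated integral is $O(x^{2})$; in particular the right-hand side is well-defined and continuous on $[0,x_{0}]$ for any $x_{0}>0$.

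Next I would set up the fixed-point argument. Let $T$ denote the operator defined by the right-hand side and consider the closed ball
\[
B=\bigl\{y\in C([0,x_{0}]):\|y-\alpha\|_{\infty}\le 1\bigr\}
\]
in the sup norm. On $B$ the bound $e^{y(s)}\le e^{\alpha+1}$ holds, so
\[
|T[y](x)-\alpha|\le \sigma e^{\alpha+1}\int_{0}^{x}\frac{t}{3}\,dt=\frac{\sigma e^{\alpha+1}}{6}x^{2}.
\]
Choosing $x_{0}$ small enough that $\sigma e^{\alpha+1}x_{0}^{2}/6\le 1$ makes $T(B)\subset B$. A similar estimate using $|e^{y_{1}}-e^{y_{2}}|\le e^{\alpha+1}|y_{1}-y_{2}|$ gives
\[
\|T[y_{1}]-T[y_{2}]\|_{\infty}\le \frac{\sigma e^{\alpha+1}x_{0}^{2}}{6}\|y_{1}-y_{2}\|_{\infty},
\]
so shrinking $x_{0}$ further makes $T$ a strict contraction. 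Banach's fixed-point theorem yields a unique $y\in B$ with $y=T[y]$.

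Finally, I would read off regularity and the original ODE from the integral equation: continuity of $e^{y}$ makes the inner integral $C^{1}$, hence $x^{2}\dot{y}(x)=-\sigma\int_{0}^{x}s^{2}e^{y(s)}\,ds$ is $C^{1}$, and differentiating once more shows $y\in C^{2}[0,x_{0}]$ and satisfies the ODE; the initial conditions $y(0)=\alpha$ and $\dot{y}(0)=\lim_{x\to 0^{+}}(-\sigma/x^{2})\int_{0}^{x}s^{2}e^{y(s)}\,ds=0$ follow by construction. The only real obstacle is the singular coefficient $2/x$, which the multiplication trick $(x^{2}\dot{y})'$ removes cleanly; once that is done, the estimate $\int_{0}^{t}s^{2}e^{y(s)}\,ds=O(t^{3})$ absorbs the subsequent $1/t^{2}$ and everything reduces to a standard Picard-type contraction on a small interval.
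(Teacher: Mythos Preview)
Your proposal is correct and follows essentially the same route as the paper: multiply by $x^{2}$ to obtain $(x^{2}\dot y)'=-\sigma x^{2}e^{y}$, pass to the equivalent integral equation, and run a Banach contraction argument on a small interval. Your version is in fact a bit more careful than the paper's---you restrict to an explicit ball $\|y-\alpha\|_\infty\le 1$ (so the Lipschitz bound $|e^{y_1}-e^{y_2}|\le e^{\alpha+1}|y_1-y_2|$ is justified), verify the self-mapping $T(B)\subset B$, and spell out the $C^{2}$ regularity and the initial conditions---whereas the paper leaves these implicit and works with unspecified $O(1)$ constants.
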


\begin{proof}
The lemma can be proved by the fixed point theorem. Multiply
(\ref{SecondorderElliptic}) by $x$, to\ give%
\[
\frac{d}{dx}\left(  x^{2}\dot{y}(x)\right)  =-\sigma e^{y(x)}x^{2}.
\]
Notice $\dot{y}(0)=0$,\ we have%
\[
\dot{y}(x)=\frac{-1}{x^{2}}\int_{0}^{x}\sigma e^{y(s)}s^{2}ds.
\]
By using $y(0)=\alpha$, (\ref{SecondorderElliptic}) is reduced to%
\[
\dot{y}(x)=\frac{-1}{x^{2}}\int_{0}^{x}\sigma e^{y(s)}s^{2}ds,\text{
}y(0)=\alpha.
\]
Set%
\[
{\normalsize f(x,y(x))=}\frac{-1}{x^{2}}\int_{0}^{x}\sigma e^{y(s)}%
{\normalsize s}^{2}{\normalsize ds.}%
\]
then for any $x_{0}>0$, we get $f\in C^{1}[0,$ $x_{0}]$. and for any $y_{1,}$
$y_{2}\in C^{2}[0,$ $x_{0}]$, we have,%
\[
\left\vert f(x,y_{1}(x))-f(x,y_{2}(x))\right\vert =\frac{\sigma\left\vert
\int_{0}^{x}s^{2}(e^{y_{2}(s)}-e^{y_{1}(s)})ds\right\vert }{x^{2}}.
\]
As $e^{y}$ is a $C^{1}$ function of $y$, we can show that the function $e^{y}%
$, is Lipschitz-continuous. And we get,%
\begin{align*}
&  \left\vert f(x,y_{1}(x))-f(x,y_{2}(x)\right\vert \\
&  =\frac{O(1)\int_{0}^{x}s^{2}\left\vert \left(  y_{2}(s\right)
-y_{1}(s)\right\vert ds}{x^{2}}\\
&  \leq O(1)x_{0}\underset{0\leq x\leq x_{0}}{\sup}\left\vert y_{1}%
(s)-y_{2}(s)\right\vert ,
\end{align*}
where $\tau\in\lbrack0,$ $x]\subseteq\lbrack0,$ $x_{0}]$. Let%
\[
{\normalsize Ty(x)=\alpha+}\int_{0}^{x}{\normalsize f(s,y(s))ds.}%
\]
We have $Ty\in C[0,$ $x_{0}]$\ and%
\begin{align*}
&  \left\vert Ty_{1}(x)-Ty_{2}(x)\right\vert \\
&  =\left\vert \int_{0}^{x}f(s,y_{1}(s))ds-\int_{0}^{x}f(s,y_{2}%
(s))ds\right\vert \\
&  \leq O(1)x_{0}\underset{0\leq x\leq x_{0}}{\sup}\left\vert y(x)_{1}%
-y(x)_{2}\right\vert .
\end{align*}
By choosing $x_{0}>0$ to be a sufficiently small number, such that
$O(1)x_{0}<1$, this shows that the mapping $T:C[0,$ $X_{0}]\rightarrow C[0,$
$x_{0}]$, is a contraction with the sup-norm. By the fixed point theorem,
there exists a unique $y(x)\in C[0,$ $x_{0}],$\ such that $Ty(x)=y(x)$. The
proof is completed.
\end{proof}

\begin{lemma}
\label{lemma3}The ODE,%
\begin{equation}
\left\{
\begin{array}
[c]{c}%
\ddot{y}(x){\normalsize +}\dfrac{2}{x}\dot{y}(x){\normalsize +\sigma e}%
^{y(x)}{\normalsize =0,}\\
y(0)=\alpha,\text{ }\dot{y}(0)=0,
\end{array}
\right.  \label{Elliptic1}%
\end{equation}
where $\sigma>0$ and $\alpha$ are constants, has a solution in $[0,$
$+\infty)$ and $\underset{x\rightarrow+\infty}{\lim}y(x)=-\infty$.
\end{lemma}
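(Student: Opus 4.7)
The plan has three stages. First, I would use Lemma \ref{lemma2} to obtain a local $C^{2}$ solution on some interval $[0,x_{0}]$, and then extend it to $[0,+\infty)$ by standard ODE continuation. Second, I would extract monotonicity from the equation to get simple a priori bounds that rule out any finite-$x$ blow-up. Third, I would argue by contradiction that the limit at infinity must be $-\infty$.

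For monotonicity and the a priori bounds, the trick used in Lemma \ref{lemma2} already does the work. Multiplying (\ref{Elliptic1}) by $x^{2}$ gives $(x^{2}\dot y)' = -\sigma e^{y}x^{2}$, and integrating with $\dot y(0)=0$ yields
\[
\dot y(x) = -\frac{1}{x^{2}}\int_{0}^{x}\sigma e^{y(s)}s^{2}\,ds < 0,\qquad x>0,
\]
so $y$ is strictly decreasing and $y(x)\leq\alpha$ on its maximal interval of existence $[0,X^{\ast})$. Consequently $e^{y}\leq e^{\alpha}$, which forces
\[
|\dot y(x)|\leq \frac{\sigma e^{\alpha}}{3}\,x \quad\text{and}\quad y(x)\geq \alpha-\frac{\sigma e^{\alpha}}{6}\,x^{2}.
\]
Thus both $y$ and $\dot y$ stay bounded on every bounded subinterval of $[0,X^{\ast})$. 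Recasting (\ref{Elliptic1}) as a first-order system in $(y,\dot y)$ on any interval $[\varepsilon,X]$ with $\varepsilon>0$ (where the coefficient $2/x$ is smooth), the standard continuation criterion then gives $X^{\ast}=+\infty$.

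For the asymptotic claim, the monotonicity and upper bound $y\leq\alpha$ imply that $L:=\lim_{x\to+\infty}y(x)\in[-\infty,\alpha)$ exists. Suppose for contradiction that $L>-\infty$. Then $e^{y(s)}\geq e^{L}$ for all $s\geq 0$, and the integral representation above yields
\[
\dot y(x)\leq -\frac{\sigma e^{L}}{3}\,x,\qquad x\geq 0.
\]
Integrating from $1$ to $x$ gives $y(x)\leq y(1)-\frac{\sigma e^{L}}{6}(x^{2}-1)\to -\infty$ as $x\to +\infty$, contradicting $y(x)\geq L$. Hence $L=-\infty$, finishing the proof.

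The only genuinely delicate point is the singularity of the coefficient $2/x$ at the origin, but that issue is already absorbed by Lemma \ref{lemma2} (which produces a $C^{2}$ solution starting from $x=0$). Away from the origin the ODE is regular, so once the monotonicity bounds above are in hand, both global extension and the divergence of $y$ to $-\infty$ are essentially routine calculus.
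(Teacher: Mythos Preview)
Your proposal is correct and follows essentially the same route as the paper: the integral representation $\dot y(x)=-\sigma x^{-2}\int_{0}^{x}e^{y(s)}s^{2}\,ds$ gives monotonicity and $y\le\alpha$, the resulting bound $|\dot y|\le\frac{\sigma e^{\alpha}}{3}x$ rules out finite-$x$ blow-up, and assuming a finite lower bound $L$ forces $\dot y\le-\frac{\sigma e^{L}}{3}x$ and hence $y\to-\infty$, a contradiction. The paper organizes this as a four-case dichotomy (finite vs.\ infinite interval, bounded below vs.\ not) and eliminates the unwanted cases one by one, whereas you streamline the same estimates into a direct a priori bound plus a single contradiction; the mathematical content is identical.
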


\begin{proof}
By integrating (\ref{Elliptic1}), we have,%
\begin{equation}
\dot{y}(x)=-\frac{\sigma}{x^{2}}\int_{0}^{x}e^{y(s)}s^{2}ds\leq0.
\label{lemma3eq1}%
\end{equation}
Thus, for $0<x<x_{0}$, $y(x)$ has a uniform lower upper bound
\[
y(x)\leq y(0)=\alpha.
\]
As we obtained he local existence in Lemma \ref{lemma2}, there are two
possibilities:\newline(1)$y(x)$ only exists in some finite interval
$[0,x_{0}]$:\newline(1a)$\underset{x\rightarrow x_{0-}}{\lim}y(x)=-\infty
$;\newline(1b)$y(x)$ has an uniformly lower bound, i.e. $y(x)\geq\alpha_{0}$
for some constant $\alpha_{0}.$\newline(2)$y(x)$ exists in $[0,$ $+\infty
)$:\newline(2a)$\underset{x\rightarrow+\infty}{\lim}y(x)=-\infty$%
;\newline(2b)$y(x)$ has an uniformly lower bound, i.e. $y(x)\geq\alpha_{0}$
for some constant $\alpha_{0}$.\newline We claim that possibility (1) doesn't
exist. We need to reject (1b) first: If the statement (1b) is true,
(\ref{lemma3eq1}) becomes%
\begin{equation}
-\frac{\sigma xe^{\alpha_{0}}}{3}=-\frac{\sigma}{x^{2}}\int_{0}^{x}%
e^{\alpha_{0}}s^{2}ds\leq\dot{y}(x). \label{possible1}%
\end{equation}
Thus, $\dot{y}(x)$ is bounded in $[0,x_{0}]$. Therefore, we can use the fixed
point theorem again to obtain a large domain of existence, such that
$[0,x_{0}+\delta]$ for some positive number $\delta$. There is a
contradiction. Therefore, (1b) is rejected.\newline Next, we don't accept (1a)
because of the following reason: It is impossible that $\underset{x\rightarrow
x_{0-}}{\lim}y(x)=-\infty$, as from (\ref{possible1}), $\dot{y}(x)$ has a
lower bound in $[0,$ $x_{0}]$:%
\begin{equation}
-\frac{\sigma x_{0}e^{\alpha}}{3}\leq\dot{y}(x). \label{lemma3eq2}%
\end{equation}
Thus, (\ref{lemma3eq2}) becomes,
\begin{align*}
y(x_{0})  &  =y(0)+\int_{0}^{x_{0}}\dot{y}(x)dx\\
&  \geq\alpha-\int_{0}^{x_{0}}\frac{\sigma xe^{\alpha}}{3}dx\\
&  =\alpha-\frac{\sigma x_{0}^{2}e^{\alpha}}{6}.
\end{align*}
Since $y(x)$ is bounded below in $[0,$ $x_{0}]$, it contracts the statement
(1a), such that $\underset{x\rightarrow x_{0-}}{\lim}y(x)=-\infty$. So, we can
exclude the possibility (1).\newline We claim that the possibility (2b)
doesn't exist. It is because
\[
\dot{y}(x)=-\dfrac{\sigma}{x^{2}}\int_{0}^{x}e^{y(s)}s^{2}ds\leq-\frac{\sigma
}{x^{2}}\int_{0}^{x}e^{\alpha_{0}}s^{2}ds=-\frac{\sigma e^{a_{0}}x}{3}.
\]
Then, we have,%
\begin{equation}
y(x)\leq\alpha-\frac{\sigma e^{a_{0}}}{6}x^{2}. \label{lemma3eq3}%
\end{equation}
By letting $x\rightarrow\infty$, (\ref{lemma3eq3}) turns out to be,
\[
y(x)=-\infty.
\]
Since a contradiction is established, we exclude the possibility (2b). Thus,
the equation (\ref{Elliptic1}) exists in $[0,$ $+\infty)$ and $\underset
{x\rightarrow+\infty}{\lim}y(x)=-\infty$. This completes the proof.
\end{proof}

Up to this stage, we are already to give the proof of Theorem \ref{thm:1}.

\begin{proof}
[Proof of Theorem 2]From Lemma \ref{lem:generalsolutionformasseq}, we easily
get that (\ref{solution1}) satisfy (\ref{gamma=1})$_{1}$. For the momentum
equation (\ref{gamma=1})$_{2}$, we get,%
\begin{align}
&  \rho(u_{t}+uu_{r})+\frac{4\pi\rho}{r^{2}}%
{\displaystyle\int\limits_{0}^{r}}
\rho(t,s)s^{2}ds-[\mu(\rho)]_{r}u_{r}-\mu(\rho)(u_{rr}+\dfrac{2}{r}%
u_{r}-\dfrac{2}{r^{2}}u)\\
&  =\rho\left[  \frac{(-C)(-1)(-C)}{(T-Ct)^{2}}r+\frac{(-C)}{T-Ct}r\cdot
\frac{(-C)}{T-Ct}\right]  +\frac{4\pi\rho}{r^{2}}%
{\displaystyle\int\limits_{0}^{r}}
\frac{e^{y(s/(T-Ct))}}{(T-Ct)^{3}}s^{2}ds\nonumber\\
&  -(\kappa\rho)_{r}\frac{(-C)}{T-Ct}-\mu(\rho)\left(  0+\dfrac{2}{r}%
\frac{(-1)}{T-Ct}-\frac{2}{r^{2}}\frac{(-1)}{T-Ct}r\right) \nonumber\\
&  =\frac{\kappa Ce^{y(r/(T-Ct))}}{(T-Ct)^{3}}\cdot\dot{y}(\frac{r}%
{T-Ct})\cdot\frac{1}{T-Ct}\frac{C}{T-Ct}+\frac{4\pi\rho}{r^{2}}%
{\displaystyle\int\limits_{0}^{r}}
\frac{e^{y(s/(T-Ct))}}{(T-Ct)^{3}}s^{2}ds\nonumber\\
&  =\frac{\kappa C\rho}{(T-Ct)^{2}}\dot{y}(\frac{r}{T-Ct})+\frac{4\pi\rho
}{r^{2}}%
{\displaystyle\int\limits_{0}^{r}}
\frac{e^{y(s/(T-Ct))}}{(T-Ct)^{3}}s^{2}ds\nonumber\\
&  =\frac{\rho}{(T-Ct)^{2}}\left[  \kappa C\dot{y}(\frac{r}{T-Ct})+\frac{4\pi
}{r^{2}(T-Ct)}%
{\displaystyle\int\limits_{0}^{r}}
e^{y\left(  s/(T-Ct)\right)  }s^{2}ds\right]  .
\end{align}
By letting $\omega=s/(T-Ct)$, it follows:%
\begin{align}
&  =\frac{\rho}{(T-Ct)^{2}}\left[  \kappa C\dot{y}(\frac{r}{T-Ct})+\frac{4\pi
}{(\frac{r}{T-Ct})^{2}}%
{\displaystyle\int\limits_{0}^{r/(T-Ct)}}
e^{y\left(  s\right)  }s^{2}ds\right] \\
&  =\frac{\rho}{(T-Ct)^{2}}Q\left(  \frac{r}{T-Ct}\right)  .\nonumber
\end{align}
And denote $z=r/(T-ct)$,%
\[
Q(\frac{r}{T-Ct})={\normalsize Q(z)=}\kappa C\dot{y}(z){\normalsize +}%
\dfrac{4\pi}{z^{2}}%
{\displaystyle\int\limits_{0}^{z}}
e^{y\left(  \omega\right)  }\omega^{2}d\omega{\normalsize .}%
\]
Differentiate $Q(z)$\ with respect to $z$,%
\begin{align*}
\dot{Q}(z)  &  =\kappa C\ddot{y}(z){\normalsize +4\pi e}^{y(z)}+\frac
{(-2)\cdot4\pi}{z^{3}}%
{\displaystyle\int\limits_{0}^{x}}
e^{y(\omega)}{\normalsize \omega}^{2}{\normalsize ds}\\
&  =-\frac{2\kappa C}{z}\dot{y}(z)+\frac{(-2)4\pi}{z^{3}}%
{\displaystyle\int\limits_{0}^{z}}
e^{y(\omega)}{\normalsize \omega}^{2}{\normalsize d\omega}\\
&  =-\frac{2}{z}Q(z),
\end{align*}
where the above result is due to the fact that we choose the following
ordinary differential equation:%
\[
\left\{
\begin{array}
[c]{c}%
\ddot{y}(z){\normalsize +}\dfrac{2}{z}\dot{y}(z){\normalsize +\dfrac{4\pi
}{\kappa C}e}^{y(z)}{\normalsize =0}\\
{\normalsize y(0)=\alpha,}\text{ }\dot{y}(0){\normalsize =0.}%
\end{array}
\right.
\]
With $Q(0)=0$, this implies that $Q(z)=0$. Thus, the momentum equation
(\ref{gamma=1})$_{2}$ is satisfied.\newline With the Lemma \ref{lemma3} about
$y(z)$, we are able to show that the family of the solutions blows up in
finite time $T/C$. This completes the proof.
\end{proof}

The statement about the blowup rate will be immediately followed:

\begin{corollary}
The blowup rate of the solution (\ref{solution1}) is
\[
\underset{t\rightarrow T/C}{\lim}\rho(t,0)(T-Ct)^{3}\geq O(1).
\]

\end{corollary}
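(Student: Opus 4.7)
The plan is essentially to evaluate the expression $\rho(t,0)(T-Ct)^{3}$ directly using the explicit form of the solution given in Theorem \ref{thm:1}, and observe that the $(T-Ct)^{3}$ factor exactly cancels the singular prefactor in $\rho$.

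Concretely, the first step is to substitute $r=0$ into the formula
$\rho(t,r)=\frac{1}{(T-Ct)^{3}}e^{y(r/(T-Ct))}$, which yields
$\rho(t,0)=\frac{1}{(T-Ct)^{3}}e^{y(0)}$. Using the initial condition $y(0)=\alpha$ from the ODE in (\ref{solution1}), this simplifies to $\rho(t,0)=\frac{e^{\alpha}}{(T-Ct)^{3}}$. Multiplying by $(T-Ct)^{3}$ gives the identity $\rho(t,0)(T-Ct)^{3}=e^{\alpha}$, which is independent of $t$.

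Taking the limit as $t\to T/C$ is then trivial, since the quantity is constant in $t$, giving $\lim_{t\to T/C}\rho(t,0)(T-Ct)^{3}=e^{\alpha}$. Since $\alpha$ is a fixed real constant, $e^{\alpha}>0$, and thus the limit is bounded below by a positive constant, i.e. $\geq O(1)$, as claimed.

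There is essentially no obstacle in this proof; the only subtlety is making sure that the evaluation $y(0)=\alpha$ is justified, which is immediate from the initial data of the ODE in (\ref{solution1}) that was already shown to have a global solution by Lemma \ref{lemma3}. The corollary is therefore a direct bookkeeping consequence of the explicit structure of the blowup ansatz, and no further estimates are required.
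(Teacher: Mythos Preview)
Your proof is correct and matches the paper's approach: the corollary is stated there as an immediate consequence of the explicit form of the solution, with no separate argument given, and your direct evaluation $\rho(t,0)(T-Ct)^{3}=e^{y(0)}=e^{\alpha}>0$ is exactly the intended computation.
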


Given that the sign of the constant $C$ in (\ref{solution 3}) is changed to be
negative, the below corollary is clearly shown.

\begin{corollary}
For the $3$-dimensional pressureless Navier--Stokes-Poisson equations with
$\theta=1$ in spherical symmetry, (\ref{gamma=1}), there exists a family of
solutions,%
\begin{equation}
\left\{
\begin{array}
[c]{c}%
\rho(t,r)=\dfrac{1}{(T-Ct)^{3}}e^{y(r/(T-Ct))}\text{, }{\normalsize u(t,r)=}%
\dfrac{-C}{T-Ct}{\normalsize r;}\\
\ddot{y}(x){\normalsize +}\dfrac{2}{x}\dot{y}(x){\normalsize +\dfrac{4\pi
}{\kappa C}e}^{y(x)}{\normalsize =0,}\text{ }y(0)=\alpha,\text{ }\dot{y}(0)=0,
\end{array}
\right.
\end{equation}
where $T>0$, $\kappa>0$, $C<0$ and $\alpha$ are constants.
\end{corollary}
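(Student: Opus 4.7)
The plan is to reduce the corollary to the verification already carried out in the proof of Theorem~\ref{thm:1}, since the ansatz has exactly the same functional form — only the sign of the constant $C$ is reversed. The work therefore splits into checking that (i) the algebraic cancellations that make the PDE system hold are insensitive to the sign of $C$, and (ii) a suitable solution of the ODE for $y$ still exists when the coupling constant flips sign.

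For the continuity equation, I would substitute (\ref{generalsolutionformassequation}) with $C<0$ into (\ref{gamma=1})$_1$ and observe that every term in the cancellation performed in Lemma~\ref{lem:generalsolutionformasseq} is polynomial in $C$: the $3C/(T-Ct)^{4}$ coming from $\rho_{t}$, the pair $\pm C\dot f$ coming from $u\rho_{r}$ and $\rho u_{r}$, and the contribution from the $\tfrac{2}{r}\rho u$ term all cancel identically regardless of the sign of $C$. For the momentum equation, I would reproduce, line for line, the calculation in the proof of Theorem~\ref{thm:1} that reduces (\ref{gamma=1})$_2$ to
\[
\frac{\rho}{(T-Ct)^{2}}\,Q\!\left(\tfrac{r}{T-Ct}\right)=0,\qquad Q(z)=\kappa C\,\dot y(z)+\frac{4\pi}{z^{2}}\int_{0}^{z}e^{y(\omega)}\omega^{2}\,d\omega.
\]
Differentiating $Q$ and inserting the ODE in (\ref{solution1}) yields $\dot Q(z)=-\tfrac{2}{z}Q(z)$, and with $Q(0)=0$ this forces $Q\equiv 0$. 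Once again, no step in this computation uses the sign of $C$.

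The only genuinely new issue is the existence of $y$. With $C<0$ the coefficient $\sigma:=\tfrac{4\pi}{\kappa C}$ is now negative, so Lemmas~\ref{lemma2} and~\ref{lemma3} do not apply verbatim — indeed $\dot y(x)=-\tfrac{\sigma}{x^{2}}\int_{0}^{x}e^{y(s)}s^{2}\,ds\geq 0$, so $y$ is now increasing. However, the contraction-mapping argument of Lemma~\ref{lemma2} is insensitive to the sign of $\sigma$: the Lipschitz estimate uses only $|e^{y_{1}}-e^{y_{2}}|$, and the smallness condition $O(1)x_{0}<1$ that selects $x_{0}$ depends on $|\sigma|$ rather than its sign. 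This gives a local $C^{2}$ solution on some interval $[0,x_{0}]$, which is all that is needed, since the corollary asserts only existence of a family of solutions and makes no blow-up or global-existence claim.

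The main (and essentially only) obstacle is the change of sign in the ODE: one must note that Lemma~\ref{lemma3}'s global argument no longer applies and resist the temptation to reuse it. I would close the proof by remarking that, because $T>0$ and $C<0$, the factor $T-Ct$ is strictly positive and increasing on $[0,\infty)$, so on the spatial patch $r<(T-Ct)x_{0}$ the constructed pair $(\rho,u)$ is defined for all forward times $t\geq 0$; this is consistent with — and explains — the absence of any blow-up assertion in the statement.
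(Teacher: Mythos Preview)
Your proposal is correct and in fact more careful than the paper's own treatment. The paper gives no argument beyond the single sentence ``Given that the sign of the constant $C$ \ldots\ is changed to be negative, the below corollary is clearly shown,'' i.e.\ it simply points to the proof of Theorem~\ref{thm:1} and the sign flip. Your plan does exactly this --- reuse Lemma~\ref{lem:generalsolutionformasseq} and the $Q(z)$ computation verbatim, noting that every step is polynomial in $C$ --- so the approach is the same.

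Where you go beyond the paper is in flagging that, with $C<0$, the coupling constant $\sigma=4\pi/(\kappa C)$ becomes negative, so Lemmas~\ref{lemma2} and~\ref{lemma3} do not apply as written. You correctly observe that the contraction-mapping step of Lemma~\ref{lemma2} depends only on $|\sigma|$ and hence still yields a local $C^{2}$ solution $y$ on some $[0,x_{0}]$, while the global/decay conclusion of Lemma~\ref{lemma3} genuinely fails (indeed $y$ is now increasing and one should expect finite-$x$ blow-up of $y$). Since the corollary asserts no blow-up and no global statement, local existence suffices, and your closing remark that $T-Ct>0$ for all $t\ge 0$ when $C<0$ is the right way to explain why no finite-time singularity is claimed. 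The paper leaves all of this implicit.
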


At last, we give some remarks to discuss the constructed blowup solutions.

\begin{remark}
Our solutions are the modification versions of blowup solutions
(\ref{solution 3}) of Yuen in the $2$-dimensional Euler-Poisson equations.
However, we emphasize that the process is not trivial, as the blowup solutions
cannot be generated to the other dimensional case $(N\neq2)$. And the blowup
solutions (\ref{solution2}) of Goldreich, Weber and Makino also cannot be
generated to the lower dimensions'($1$-dimensional or $2$-dimensional) case.
\end{remark}

\begin{remark}
It is well-known that it is an open problem to construct the analytical blowup
solutions in the $3$-dimensional case to understand the blowup phenomena
(black hole) in the Euler-Poisson equations in astrophysics. But for the
non-trivial blowup solutions, there only exists the blowup solutions
(\ref{solution2}). Meanwhile, due to the solution structure of $u$ in
(\ref{solution2})\ and (\ref{solution 3}),%
\[
u=\frac{\dot{a}(t)}{a(t)}r,
\]
where $a(t)\neq0$ is a function of time,\newline that automatically satisfies
that the density-independent viscosity function $(\theta=0)$:%
\[
vis(u)=k\Delta u=\kappa(u_{rr}+\frac{N-1}{r}u_{r}-\frac{N-1}{r^{2}}u)=0.
\]
In this article, we shift the scope of consideration from the Euler-Poisson
equations in to the Navier-Stokes-Poisson equations with density-dependent
viscosity. We can extend the isothermal $R^{2}$ blowup solutions to the
Euler-Poisson equations to pressureless Navier-Stokes-Poisson equations with
density-dependent viscosity in $R^{3}$. Here blowup solutions can answer the
above problem in the modified $3$-dimensional fluid dynamic system. We notice
that the extension of such blowup solutions are not suitable for solutions
(\ref{solution2}) from $R^{N}$ to $R^{N+1}$ ($N\geq3$). In the future, the
blowup solutions in the Euler-Poisson equations will still be sought.
\end{remark}

\begin{remark}
Besides, if we consider the $3$-dimensional drift-diffusion model in
semiconductors in spherical symmetry,%
\[
\left\{
\begin{array}
[c]{rl}%
\rho_{t}+u\rho_{r}+\rho u_{r}+{\normalsize \dfrac{2}{r}\rho u} &
{\normalsize =0,}\\
\rho\left(  u_{t}+uu_{r}\right)  & {\normalsize =+}\dfrac{4\pi\rho}{r^{2}}%
\int_{0}^{r}\rho(t,s)s^{2}ds+[\kappa\rho]_{r}u_{r}+(\kappa\rho)(u_{rr}%
+\dfrac{2}{r}u_{r}-\dfrac{2}{r^{2}}u),
\end{array}
\right.
\]
the special solutions with infinite mass may be obtained as follows:%
\begin{equation}
\left\{
\begin{array}
[c]{c}%
\rho(t,r)=\dfrac{1}{(T-Ct)^{3}}e^{y(r/(T-Ct))}\text{, }{\normalsize u(t,r)=}%
\dfrac{-C}{T-Ct}{\normalsize r;}\\
\ddot{y}(x){\normalsize +}\dfrac{2}{x}\dot{y}(x)-{\normalsize \dfrac{4\pi
}{\kappa C}e}^{y(x)}{\normalsize =0,}\text{ }y(0)=\alpha,\text{ }\dot{y}(0)=0
\end{array}
\right.
\end{equation}

\end{remark}

\end{document}